\documentclass[10pt]{article}

\usepackage[utf8]{inputenc}
\usepackage{amsthm}
\usepackage{amsmath}
\usepackage{amsfonts}
\usepackage{amssymb}
\usepackage{xcolor}
\usepackage{algorithm}
\usepackage{algpseudocode}
\usepackage{graphicx}
\usepackage{subfigure}
\usepackage[font=small,skip=0pt]{caption}
\usepackage{subfig}

\makeatletter
\newcommand*{\rom}[1]{\expandafter\@slowromancap\romannumeral #1@}
\makeatother
\newtheorem{theorem}{Theorem}

\newcommand{\firstAuthor}       {Raj~Thilak~Rajan} %
\newcommand{\theTitle}{On the choice of reference
in offset calibration}

\title{\theTitle}

\author{\firstAuthor
\thanks{The authors are with the Faculty of Electrical Engineering, Mathematics
and Computer Science, Delft University of Technology, 2628CD Delft,
The Netherlands (e-mail: r.t.rajan@tudelft.nl). This work is partially funded by the European Leadership Joint Undertaking (ECSEL JU), under grant agreement No. 876019, the ADACORSA (Airborne Data Collection on Resilient System Architectures) project, and  by the Dutch-PIPP (Partnerships for Space Instruments Applications Preparatory Programme), funded by NWO (Netherlands Organisation for Scientific Research) and NSO (Netherlands Space Office).}}

\begin{document}



\newcommand{\incgrl}{\includegraphics[scale=0.7]}
\newcommand{\incgrs}{\includegraphics[scale=0.4]}
\def\percentile{^{^o/_{oo}}}
\newcommand{\mbf}{\mathbf}
\newcommand{\graphwidth}{7.2cm}
\newcommand{\define}{\equiv}
\newcommand{\half}{\frac{1}{2}}
\newcommand{\invN}{\frac{1}{N}}
\newcommand{\logtwo}{\,^2\hspace{-0.05cm}\log}
\newcommand{\logten}{\,^{10}\hspace{-0.05cm}\log}
\newcommand{\bzero}{\mbf{0}}
\newcommand{\nsp}{\vspace{-0.2cm}}
\def\b1{{\mathbf 1}}
\newcommand{\defeq}{\stackrel{\mathrm{def}}{=}}
\def\bell{{\ensuremath{\mathbf{\ell}}}}
\newcommand{\bRfilt}{\check{\ensuremath{\mathbf{R}}}}
\newcommand{\bxfilt}{\check{\ensuremath{\mathbf{x}}}}

\newcommand{\definesymbollistoffsets}{\betab \hspace{0.5cm}\= \hspace{3.0cm} \= \hspace{1cm} \= \kill}
\newcommand{\pbA}{\parbox[t]{2.3cm}}
\newcommand{\pbB}{\parbox[t]{10cm}}

\def\beB{{\mbox{\boldmath{$B$}}}}
\def\beE{{\mbox{\boldmath{$E$}}}}
\def\beJ{{\mbox{\boldmath{$J$}}}}
\def\beR{{\mbox{\boldmath{$R$}}}}
\def\beW{{\mbox{\boldmath{$W$}}}}

\newcommand{\PDF}{probability density function }
\newcommand{\PDFs}{probability density functions }

%
\newcommand{\balpha}{\boldsymbol\alpha}
\newcommand{\bbeta}{\boldsymbol\beta}
\newcommand{\bomega}{\boldsymbol\omega}
\newcommand{\bphi}{\boldsymbol\phi}
\def\bbell{{\mbox{\boldmath{$\bell$}}}}
\def\bDelta{{\mbox{\boldmath{$\Delta$}}}}
\def\bepsilon{{\mbox{\boldmath{$\epsilon$}}}}
\def\bgamma{{\mbox{\boldmath{$\gamma$}}}}
\def\bGamma{{\mbox{\boldmath{$\Gamma$}}}}
\def\bLambda{{\mbox{\boldmath{$\Lambda$}}}}
\def\blambda{{\mbox{\boldmath{$\lambda$}}}}
\def\bmu{{\mbox{\boldmath{$\mu$}}}}
\def\bphi{{\mbox{\boldmath{$\phi$}}}}
\def\bPhi{{\mbox{\boldmath{$\Phi$}}}}
\def\bOmega{{\mbox{\boldmath{$\Omega$}}}}
 \def\bPsi{{\mbox{\boldmath{$ \Psi$}}}}
 \def\bXi{{\mbox{\boldmath{$ \Xi$}}}}
 \def\bxi{{\mbox{\boldmath{$ \xi$}}}}
\def\bPsih{{\mbox{\boldmath{$ \Psi$}}}}
\newcommand{\bPhic}{{\ensuremath{\mathbf{\bar{\Phi}}}}}
\def\btau{{\mbox{\boldmath{$\tau$}}}}
\def\bsigma{{\mbox{\boldmath{$\sigma$}}}}
\def\bSigma{{\mbox{\boldmath{$\Sigma$}}}}
\def\bthetaa{\hat{{\mbox{\boldmath{$\theta$}}}}}
\newcommand{\bLambdaa}{{\ensuremath{\mathbf{\hat{\Lambda}}}}}

\def\brho{{\mbox{\boldmath{$\rho$}}}}
\def\bdelta{{\mbox{\boldmath{$\delta$}}}}

\newcommand{\cA}{\ensuremath{\mathcal{A}}}
\newcommand{\cB}{\ensuremath{\mathcal{B}}}
\newcommand{\cC}{\ensuremath{\mathcal{C}}}
\newcommand{\cD}{\ensuremath{\mathcal{D}}}
\newcommand{\cE}{\ensuremath{\mathcal{E}}}
\newcommand{\cF}{\ensuremath{\mathcal{F}}}
\newcommand{\cG}{\ensuremath{\mathcal{G}}}
\newcommand{\cH}{\ensuremath{\mathcal{H}}}
\newcommand{\cI}{\ensuremath{\mathcal{I}}}
\newcommand{\cJ}{\ensuremath{\mathcal{J}}}
\newcommand{\cK}{\ensuremath{\mathcal{K}}}
\newcommand{\cL}{\ensuremath{\mathcal{L}}}
\newcommand{\cM}{\ensuremath{\mathcal{M}}}
\newcommand{\cN}{\ensuremath{\mathcal{N}}}
\newcommand{\cO}{\ensuremath{\mathcal{O}}}
\newcommand{\cP}{\ensuremath{\mathcal{P}}}
\newcommand{\bcR}{\boldmath{\ensuremath{\mathcal{R}}}}
\newcommand{\cR}{\ensuremath{\mathcal{R}}}
\newcommand{\cS}{\ensuremath{\mathcal{S}}}
\newcommand{\cT}{\ensuremath{\mathcal{T}}}
\newcommand{\cU}{\ensuremath{\mathcal{U}}}
\newcommand{\cV}{\ensuremath{\mathcal{V}}}
\newcommand{\cX}{\ensuremath{\mathcal{X}}}
\newcommand{\cY}{\ensuremath{\mathcal{Y}}}
\newcommand{\cW}{\ensuremath{\mathcal{W}}}
\newcommand{\bcW}{{\ensuremath{\mathbf{\mathcal{W}}}}}
\newcommand{\cZ}{\ensuremath{\mathcal{Z}}}

\newcommand{\cs}{\ensuremath{\mathcal{s}}}
\newcommand{\cf}{\ensuremath{\mathcal{f}}}
\newcommand{\cg}{\ensuremath{\mathcal{g}}}
\newcommand{\ch}{\ensuremath{\mathcal{h}}}

\newcommand{\uN}{\ensuremath{\underline{N}}}
\newcommand{\bXu}{\ensuremath{\underline{\bX}}}
\newcommand{\bYu}{\ensuremath{\underline{\bY}}}
\newcommand{\bZu}{\ensuremath{\underline{\bZ}}}
\newcommand{\bAu}{\ensuremath{\underline{\bA}}}
\newcommand{\bBu}{\ensuremath{\underline{\bB}}}
\newcommand{\bCu}{\ensuremath{\underline{\bC}}}
\newcommand{\bUu}{\ensuremath{\underline{\bU}}}
\newcommand{\bFu}{\ensuremath{\underline{\bF}}}
\newcommand{\bGu}{\ensuremath{\underline{\bG}}}
\newcommand{\bJu}{{\ensuremath{\underline{\bJ}}}}
\newcommand{\bHu}{{\ensuremath{\underline{\bH}}}}
\newcommand{\bWu}{{\ensuremath{\underline{\bW}}}}

\newcommand{\brhou}{\ensuremath{\underline{\brho}}}
\newcommand{\bphiu}{\ensuremath{\underline{\bphi}}}
\newcommand{\bChiu}{\ensuremath{\underline{\bChi}}}
\newcommand{\bSigmau}{\ensuremath{\underline{\bSigma}}}

\newcommand{\baa}{{\ensuremath{\mathbf{\hat{a}}}}}
\newcommand{\bda}{{\ensuremath{\mathbf{\hat{d}}}}}
\newcommand{\bDela}{{\ensuremath{\mathbf{\hat{\Delta}}}}}
\newcommand{\bga}{{\ensuremath{\mathbf{\hat{g}}}}}
\newcommand{\bua}{{\ensuremath{\mathbf{\hat{u}}}}}
\newcommand{\bAa}{{\ensuremath{\mathbf{\widehat{A}}}}}
\newcommand{\bBa}{{\ensuremath{\mathbf{\widehat{B}}}}}
\newcommand{\bDa}{{\ensuremath{\mathbf{\widehat{D}}}}}
\newcommand{\bGa}{{\ensuremath{\mathbf{\widehat{G}}}}}
\newcommand{\bPa}{{\ensuremath{\mathbf{\widehat{P}}}}}
\newcommand{\bQa}{{\ensuremath{\mathbf{\widehat{Q}}}}}
\newcommand{\bRa}{{\ensuremath{\mathbf{\widehat{R}}}}}
\newcommand{\bUa}{{\ensuremath{\mathbf{\widehat{U}}}}}

\newcommand{\xc}{\bar{x}}
\newcommand{\bxc}{{\ensuremath{\mathbf{\bar{x}}}}}
\newcommand{\bgc}{{\ensuremath{\mathbf{\bar{g}}}}}
\newcommand{\bGc}{{\ensuremath{\mathbf{\bar{G}}}}}
\newcommand{\bWc}{{\ensuremath{\mathbf{\bar{W}}}}}
\newcommand{\bRc}{{\ensuremath{\mathbf{\bar{R}}}}}

\newcommand{\bVn}{\bV^{\perp}}
\newcommand{\bUn}{\bU^{\perp}}
\newcommand{\rE}{{\rm E}}

\newcommand{\argmin}{\mbox{\rm arg}\min}
\newcommand{\abs}{\mbox{\rm abs}}
\newcommand{\blockdiag}{\mbox{\rm blockdiag}}
\newcommand{\col}{\mbox{\rm col}}
\newcommand{\cond}{\mbox{\rm cond}}
\newcommand{\cov}{\mbox{\rm cov}}
\newcommand{\CRB}{\mbox{\rm CRB}}
\newcommand{\cum}{\mbox{\rm cum}}
\newcommand{\diag}{\mbox{\rm diag}}
\newcommand{\bdiag}{\mbox{\rm bdiag}}
\newcommand{\eig}{\mbox{\rm eig}}
\renewcommand{\Im}{\mbox{\rm Im}}
\newcommand{\imag}{\mbox{\rm imag}}
\newcommand{\sinc}{\mbox{\rm sinc}}
\newcommand{\INSR}{\mbox{\rm INSR}}
\newcommand{\order}{\mbox{\rm \cO}}
\newcommand{\rank}{\mbox{\rm rank}}
\renewcommand{\Re}{\mbox{\rm Re}}
\newcommand{\range}{\mbox{\rm ran}}
\newcommand{\real}{\mbox{\rm real}}
\newcommand{\row}{\mbox{\rm row}}
\newcommand{\sign}{\mbox{\rm sign}}
\newcommand{\INR}{\mbox{\rm INR}}
\newcommand{\SINR}{\mbox{\rm SINR}}
\newcommand{\SNIR}{\mbox{\rm SNIR}}
\newcommand{\SIR}{\mbox{\rm SIR}}
\newcommand{\SNR}{\mbox{\rm SNR}}
\newcommand{\spann}{\mbox{\rm span}}
\newcommand{\tr}{{\ensuremath{\mbox{\rm tr}}}}
\newcommand{\unvec}{\mbox{\rm unvec}}
\newcommand{\unvect}{{\vect}^{-1}}
\newcommand{\var}{\mbox{\rm var}}
\newcommand{\vect}{\mbox{\rm vec}}
\newcommand{\vecdiag}{\mbox{\rm vecdiag}}

\def\cf{{cf.\ }}
\def\viz{{viz.\ }}
\def\etal{{et\ al.\ }}
\def\eg{{e.g.,\ }}
\def\ie{{i.e.,\ }}

\newcommand{\betab}{\begin{tabbing}}
\newcommand{\entab}{\end{tabbing}}
\newcommand{\beitem}{\begin{itemize}}
\newcommand{\enitem}{\end{itemize}}
\newcommand{\bea}{\begin{array}}
\newcommand{\ena}{\end{array}}
\newcommand{\beq}{\begin{equation}}
\newcommand{\enq}{\end{equation}}
\newcommand{\beqa}{\begin{eqnarray}}
\newcommand{\enqa}{\end{eqnarray}}
\newcommand{\beqan}{\begin{eqnarray*}}
\newcommand{\enqan}{\end{eqnarray*}}
\newcommand{\beenum}{\begin{enumerate}}
\newcommand{\enenum}{\end{enumerate}}
\newcommand{\DL}{\begin{dashlist}}
\newcommand{\DLE}{\end{dashlist}}

\newcommand{\ba}{{\ensuremath{\mathbf{a}}}}
\newcommand{\bb}{{\ensuremath{\mathbf{b}}}}
\newcommand{\bc}{{\ensuremath{\mathbf{c}}}}
\newcommand{\bd}{{\ensuremath{\mathbf{d}}}}
\newcommand{\be}{{\ensuremath{\mathbf{e}}}}
\newcommand{\bff}{{\ensuremath{\mathbf{f}}}}
\newcommand{\bg}{{\ensuremath{\mathbf{g}}}}
\newcommand{\bh}{{\ensuremath{\mathbf{h}}}}
\newcommand{\bk}{{\ensuremath{\mathbf{k}}}}
\newcommand{\bl}{{\ensuremath{\mathbf{l}}}}
\newcommand{\bm}{{\ensuremath{\mathbf{m}}}}
\newcommand{\bn}{{\ensuremath{\mathbf{n}}}}
\newcommand{\bp}{{\ensuremath{\mathbf{p}}}}
\newcommand{\bq}{{\ensuremath{\mathbf{q}}}}
\newcommand{\br}{{\ensuremath{\mathbf{r}}}}
\newcommand{\bs}{{\ensuremath{\mathbf{s}}}}
\newcommand{\bt}{{\ensuremath{\mathbf{t}}}}
\newcommand{\bu}{{\ensuremath{\mathbf{u}}}}
\newcommand{\bv}{{\ensuremath{\mathbf{v}}}}
\newcommand{\bw}{{\ensuremath{\mathbf{w}}}}
\newcommand{\bx}{{\ensuremath{\mathbf{x}}}}
\newcommand{\by}{{\ensuremath{\mathbf{y}}}}
\newcommand{\bz}{{\ensuremath{\mathbf{z}}}}

\newcommand{\brdot}{{\ensuremath{\mathbf{\dot{\br}}}}}
\newcommand{\brrdot}{{\ensuremath{\mathbf{\ddot{\br}}}}}

\newcommand{\aX}{{\ensuremath{\mathbf{\breve{X}}}}}
\newcommand{\ad}{{\ensuremath{\mathbf{\breve{d}}}}}
\newcommand{\tX}{{\ensuremath{\mathbf{\tilde{X}}}}}
\newcommand{\td}{{\ensuremath{\mathbf{\tilde{d}}}}}

\newcommand{\bA}{{\ensuremath{\mathbf{A}}}}
\newcommand{\bB}{{\ensuremath{\mathbf{B}}}}
\newcommand{\bC}{{\ensuremath{\mathbf{C}}}}
\newcommand{\bD}{{\ensuremath{\mathbf{D}}}}
\newcommand{\bE}{{\ensuremath{\mathbf{E}}}}
\newcommand{\bF}{{\ensuremath{\mathbf{F}}}}
\newcommand{\bG}{{\ensuremath{\mathbf{G}}}}
\newcommand{\bH}{{\ensuremath{\mathbf{H}}}}
\newcommand{\bI}{{\ensuremath{\mathbf{I}}}}
\newcommand{\bJ}{{\ensuremath{\mathbf{J}}}}
\newcommand{\bK}{{\ensuremath{\mathbf{K}}}}
\newcommand{\bL}{{\ensuremath{\mathbf{L}}}}
\newcommand{\bM}{{\ensuremath{\mathbf{M}}}}
\newcommand{\bN}{{\ensuremath{\mathbf{N}}}}
\newcommand{\bO}{{\ensuremath{\mathbf{O}}}}
\newcommand{\bP}{{\ensuremath{\mathbf{P}}}}
\newcommand{\bQ}{{\ensuremath{\mathbf{Q}}}}
\newcommand{\bR}{{\ensuremath{\mathbf{R}}}}
\newcommand{\bS}{{\ensuremath{\mathbf{S}}}}
\newcommand{\bT}{{\ensuremath{\mathbf{T}}}}
\newcommand{\bU}{{\ensuremath{\mathbf{U}}}}
\newcommand{\bV}{{\ensuremath{\mathbf{V}}}}
\newcommand{\bW}{{\ensuremath{\mathbf{W}}}}
\newcommand{\bX}{{\ensuremath{\mathbf{X}}}}
\newcommand{\bY}{{\ensuremath{\mathbf{Y}}}}
\newcommand{\bZ}{{\ensuremath{\mathbf{Z}}}}

\newcommand{\xt}{\widetilde{\ensuremath{x}}}
\newcommand{\bxt}{\widetilde{\ensuremath{\bx}}}
\newcommand{\et}{\widetilde{\ensuremath{e}}}
\newcommand{\bet}{\widetilde{\ensuremath{\be}}}

\newcommand{\bAt}{\widetilde{\ensuremath{\mathbf{A}}}}
\newcommand{\bBt}{\widetilde{\ensuremath{\mathbf{B}}}}
\newcommand{\bCt}{\widetilde{\ensuremath{\mathbf{C}}}}
\newcommand{\bDt}{\widetilde{\ensuremath{\mathbf{D}}}}
\newcommand{\bEt}{\widetilde{\ensuremath{\mathbf{E}}}}
\newcommand{\bGt}{\widetilde{\ensuremath{\mathbf{G}}}}
\newcommand{\bJt}{\widetilde{\ensuremath{\mathbf{J}}}}
\newcommand{\bnt}{\widetilde{\ensuremath{\mathbf{n}}}}
\newcommand{\bRt}{\widetilde{\ensuremath{\mathbf{R}}}}
\newcommand{\bXt}{\widetilde{\ensuremath{\mathbf{X}}}}
\newcommand{\bYt}{\widetilde{\ensuremath{\mathbf{Y}}}}
\newcommand{\bZt}{\widetilde{\ensuremath{\mathbf{Z}}}}
\newcommand{\bVt}{\widetilde{\ensuremath{\mathbf{V}}}}
\newcommand{\bWt}{\widetilde{\ensuremath{\mathbf{W}}}}

\newcommand{\bdA}{\bA^\dagger}

\newcommand{\stokesI}{\ensuremath{\begin{pmatrix} 1 & 0 \\ 0 &  1\end{pmatrix}}}
\newcommand{\stokesQ}{\ensuremath{\begin{pmatrix} 1 & 0 \\ 0 & -1\end{pmatrix}}}
\newcommand{\stokesU}{\ensuremath{\begin{pmatrix} 0 & 1 \\ 1 &
0\end{pmatrix}}}
\newcommand{\stokesV}{\ensuremath{\begin{pmatrix} 0 & -\imath \\ \imath & 0\end{pmatrix}}}

\def\bSigma{{\mbox{\boldmath{$\Sigma$}}}}
\def\bSum{{\mbox{\boldmath{$\sum$}}}}
\def\cSigma{\emph{\mbox{\boldmath{$\Sigma$}}}}
\def\bTheta{{\mbox{\boldmath{$\Theta$}}}}
\def\btheta{{\mbox{\boldmath{$\theta$}}}}
\def\bOmega{{\mbox{\boldmath{$\Omega$}}}}
\def\btau{{\mbox{\boldmath{$\tau$}}}}
\def\bnu{{\mbox{\boldmath{$\nu$}}}}
\def\bpsi{{\mbox{\boldmath{$\psi$}}}}
\def\bupsilon{{\mbox{\boldmath{$\upsilon$}}}}
\def\biota{{\mbox{\boldmath{$\iota$}}}}
\def\bEta{{\mbox{\boldmath{$\eta$}}}}
\def\bZeta{{\mbox{\boldmath{$\zeta$}}}}
\def\brho{{\mbox{\boldmath{$\rho$}}}}
\def\bxi{{\mbox{\boldmath{$\xi$}}}}
\def\bXi{{\mbox{\boldmath{$\Xi$}}}}
\def\tbJ{{\mbox{\boldmath{$\tilde{\bJ}$}}}}
\def\abJ{{\mbox{\boldmath{$\acute{\bJ}$}}}}
\def\wrt{{w.r.t.\ }}
\def\iid{{i.i.d.\ }}
\newcommand{\norm}[1]{\lVert#1\rVert}
\newcommand{\normTwo}[1]{\lVert#1\rVert_2}

\def\ur{{\underline{r}}}
\def\bur{{\underline{\br}}}

\newcommand{\cd}{\ensuremath{\mathcal{d}}}

\newcommand{\cramer}{cram\'{e}r}
\newcommand{\Cramer}{Cram\'{e}r}

\newcommand{\barN}{{\ensuremath{\bar{N}}}}
\newcommand{\barP}{{\ensuremath{\bar{P}}}}

\newcommand{\bdu}{{\ensuremath{\underline{\bd}}}}
\newcommand{\bxu}{{\ensuremath{\underline{\bx}}}}
\newcommand{\byu}{{\ensuremath{\underline{\by}}}}

\newcommand{\thetau}{{\ensuremath{\underline{\theta}}}}
\newcommand{\bthetau}{{\ensuremath{\underline{\btheta}}}}
\newcommand{\trace}{{\ensuremath{\text{Tr}}}}

\def\boldeta{{\mbox{\boldmath{$\eta$}}}}

\newcommand{\Ver}{{\ensuremath{\mathcal{V}}}}
\newcommand{\Edg}{{\ensuremath{\mathcal{E}}}}

\newcommand{\bdX}   {{\ensuremath{\dot  {\bX}}}}
\newcommand{\bddX}  {{\ensuremath{\ddot {\bX}}}}
\newcommand{\bdddX} {{\ensuremath{\dddot{\bX}}}}

\newcommand{\bdx}   {{\ensuremath{\dot  {\bx}}}}
\newcommand{\bddx}  {{\ensuremath{\ddot {\bx}}}}
\newcommand{\bdddx} {{\ensuremath{\dddot{\bx}}}}

\newcommand{\bdXu}  {{\ensuremath{\dot  {\bXu}}}}
\newcommand{\bddXu} {{\ensuremath{\ddot {\bXu}}}}
\newcommand{\bdddXu}{{\ensuremath{\dddot{\bXu}}}}
\newcommand{\bdxu}  {{\ensuremath{\underline{\bxu}}}}

\newcommand{\bdYu}  {{\ensuremath{\dot  {\bYu}}}}
\newcommand{\bddYu} {{\ensuremath{\ddot {\bYu}}}}
\newcommand{\bdddYu}{{\ensuremath{\dddot{\bYu}}}}

\newcommand{\bdZu}  {{\ensuremath{\dot  {\bZu}}}}
\newcommand{\bddZu} {{\ensuremath{\ddot {\bZu}}}}
\newcommand{\bdddZu}{{\ensuremath{\dddot{\bZu}}}}

\newcommand{\bdH}   {{\ensuremath{\dot  {\bH}}}}
\newcommand{\bddH}  {{\ensuremath{\ddot {\bH}}}}
\newcommand{\bdddH} {{\ensuremath{\dddot{\bH}}}}

\newcommand{\bdh}   {{\ensuremath{\dot  {\bh}}}}
\newcommand{\bddh}  {{\ensuremath{\ddot {\bh}}}}
\newcommand{\bdddh} {{\ensuremath{\dddot{\bh}}}}

\newcommand{\bdR}   {{\ensuremath{\dot  {\bR}}}}
\newcommand{\bddR}  {{\ensuremath{\ddot {\bR}}}}
\newcommand{\bdddR} {{\ensuremath{\dddot{\bR}}}}

\newcommand{\bdD}   {{\ensuremath{\dot  {\bD}}}}
\newcommand{\bddD}  {{\ensuremath{\ddot {\bD}}}}
\newcommand{\bdddD} {{\ensuremath{\dddot{\bD}}}}

\newcommand{\bdB}   {{\ensuremath{\dot  {\bB}}}}
\newcommand{\bddB}  {{\ensuremath{\ddot {\bB}}}}
\newcommand{\bdddB} {{\ensuremath{\dddot{\bB}}}}

\newcommand{\bchi}  {\ensuremath{\textit{\textbf{x}}}}
\newcommand{\bChi}  {\ensuremath{\boldsymbol{\chi}}}

\newcounter{remarkCounter}
\stepcounter{remarkCounter}
\newcommand{\newRemark}[2]{\textit{ {\bf Remark \arabic{remarkCounter}}: (\textbf{#1}): #2 .} \stepcounter{remarkCounter}}

\newtheorem{definition}{Definition}
\newtheorem{assumption}{Assumption}
\newtheorem{proposition}{Proposition}

\maketitle

\begin{abstract}Sensor calibration is an indispensable feature in any networked cyberphysical system. In this paper, we consider a sensor network plagued with offset errors, measuring a rank-$1$ signal subspace, where each sensor collects measurements under a linear model with additive zero-mean Gaussian noise. Under varying assumptions on the underlying noise covariance, we investigate the effect of using an arbitrary reference for estimating the sensor offsets, in contrast to the `average of all the unknown offsets' as a reference. We show that the \emph{average} reference yields an efficient minimum variance unbiased estimator. If the underlying noise is homoscedastic in nature, then the \emph{average} reference yields a factor $2$ improvement on the variance, as compared to any arbitrarily chosen reference within the network. Furthermore, when the underlying noise is independent but not identical, we derive an expression for the improvement offered by the \emph{average} reference. We demonstrate our results using the problem of clock synchronization in sensor networks, and present directions for future work. 
\end{abstract}

\begin{keywords}
Sensor networks, Blind calibration, Parametric constraints, \Cramer-Rao bounds
\end{keywords}

\section{Introduction}
Sensors play a vital role in the burgeoning fields of internet of things (IoT) \cite{gubbi2013_iot}, networked cyberphysical systems \cite{gunes2014_cyberPhysicalSystems}, and Wireless Sensor Networks (WSN) \cite{sharma2013_wsn_issues}, with diverse applications e.g., environmental monitoring \cite{xie2017review}, remote sensing \cite{weiss2020_agriculture} and space systems \cite{rashvand2014_wsn_space}, to name a few. Network-wide sensor calibration is a ubiquitous challenge in these applications, which is quintessential for accurately measuring the underlying signal of interest \cite{ling2018self,DEVITO2020_fieldCalibration}. Under the assumption that the ground truth lies in a known lower-dimensional subspace of the measurement subspace, the sensor gains \cite{balzano2007blind,dorffer2016blind,stankovic2018consensus} and the sensor drifts \cite{wang2016blind-drift, yang2021lightweight-drift, zaidan2022intelligent-drift} can be uniquely determined using blind calibration techniques. However, in the absence of a known reference, the estimation of offsets is typically infeasible, and thus leads to an ill-posed problem. In practise, in the absence of an external reference, one (or many) of the sensors within the network is chosen as a reference to make the system identifiable \cite{lipor2014robust}. More recently, in pursuit of an efficient minimum variance unbiased estimator, the average of the calibration parameters (e.g., offsets) is proposed as a reference in various applications, e.g., in antenna calibration \cite{wijnholdsConstrained06}, in clock synchronization \cite{rajan2015joint}, and in sensor calibration of air-quality networks \cite{rajan2018reference}. Although the \emph{average} reference yields optimality for various data models, it is unclear if it yields an optimal solution for a generalized linear model for offset estimation, and if so, how much is the improvement on performance as compared to any arbitrarily chosen reference within the network ? 



\textit{Contributions:} In this paper, we consider a sensor network plagued with offset errors, where each sensor collects measurements under a linear Gaussian model, and measures an underlying signal subspace of rank-$1$. If the underlying Gaussian noise covariance is wide-sense stationary across the measurements, we show that the \emph{average} reference is the optimal reference for estimating the unknown sensor offsets in the network. Furthermore, when the Gaussian noise is \textit{i.i.d}, we show that the $\emph{average}$ reference offers a factor $2$ improvement in the variance of any unbiased estimator w.r.t. any arbitrarily chosen sensor reference in the network. Finally, when the Gaussian noise is independent across the sensors, but not identical, we derive an expression for the improvement offered by the \emph{average} reference in contrast to any other sensor within the network. 

\textit{Notation and properties:} The Kronecker product is indicated by $\otimes$, the transpose operator by ($\cdot)^T$ and $\equiv$ denotes equality by definition. $\b1_N$ and $\bzero_N$ denote a column vector of ones and zeros respectively. $\bI_N$ is an identity matrix, and $\bdiag(\ba)$ represents a diagonal matrix with elements of vector $a$ along the diagonal. We frequently use the properties, \begin{eqnarray}
    \text{vec}(\bA\bB\bC) &=& (\bC^T \otimes \bA)\text{vec}(\bB)
    \label{eq:vec_ABC} \\ 
    \text{vec}(\bA\otimes\bB)(\bC\otimes \bD)&=&(\bA\bC \otimes \bB\bD)
    \label{eq:prod_of_kron_prod}
\end{eqnarray} where $\text{vec}(.)$ indicates vectorization and $\bA,\bB,\bC$ and $\bD$ are matrices of appropriate dimensions. We use the Sherman-Morrison identity \begin{equation}
    (\bA + \bb\bb^T)^{-1} 
    = \bA^{-1} - \bA^{-1}\bb(1 + \bb^T\bA^{-1}\bb)^{-1}\bb^T\bA^{-1},
    \label{eq:sherman-morrison}
\end{equation} where $\bA$ is an invertible square matrix, and $\bb$ is a vector.



\subsection{Preliminaries}
Prior to modeling the sensor calibration problem, we briefly state the theoretical lower bound on the variance of any unbiased estimator under parametric constraints in Theorem~\ref{thm: ccrb}, and in Theorem~\ref{thm: optimal_constraint} we give the conditions for an optimal constraint set to yield an \emph{efficient} estimator.

\begin{theorem}[Constrained \Cramer\ Rao Bound (CCRB)] \label{thm: ccrb}Consider a consistent set of $k$ continuously differentiable constraints on $\btheta$ \ie $\bc(\btheta) = 0$, and let $\bC(\btheta)=\partial \bc(\btheta) / \partial \btheta^T$ be the gradient matrix which is full row rank, then the Constrained \Cramer\ Rao lower Bound (CCRB) on the variance of any unbiased estimator exists, and is bounded by \begin{equation}
    \mathbb{E} \big\{ (\hat{\btheta}-\btheta))(\hat{\btheta}-\btheta))^T \big\}
    \equiv \bSigma_{\hat{\theta}}
    \ge \bU\big(\bU^T\bF\bU\big)^{-1}\bU^T, 
    \label{eq:CCRB}
\end{equation} where $\bU$ spans the null space of the gradient matrix $\bC(\btheta)$, and the Fisher information matrix (FIM) is given by $\bF = -\mathbb{E} \{ \partial^2 \log p(\by;\btheta)/\partial \btheta^2 \}$, where $p(.)$ is the {p.d.f.} of the measurements $\by$, which meets certain regularity conditions \cite{Kay1993}.
\end{theorem}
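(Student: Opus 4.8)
The plan is to establish (\ref{eq:CCRB}) via the classical correlation-inequality (Schur-complement) argument, adapted so that admissible parameter variations are confined to the constraint manifold. First I would introduce the score vector $\bg \equiv \partial \log p(\by;\btheta)/\partial \btheta$ and invoke the regularity conditions of \cite{Kay1993}, which guarantee $\mathbb{E}\{\bg\} = \bzero$ and $\mathbb{E}\{\bg\bg^T\} = \bF$, and which permit interchanging differentiation and integration. Because $\bc(\btheta) = \bzero$ defines a smooth $(p-k)$-dimensional manifold (with $p \equiv \dim \btheta$), I would locally reparametrize it as $\btheta = \btheta(\bphi)$, $\bphi \in \mathbb{R}^{p-k}$, chosen so that the Jacobian $\partial\btheta/\partial\bphi^T = \bU$ spans the null space of $\bC(\btheta)$, i.e. $\bC(\btheta)\bU = \bzero$. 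The full-row-rank hypothesis on $\bC(\btheta)$ is precisely what makes this local parametrization, and hence the tangent basis $\bU$, well defined (via the implicit function theorem).

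Next I would differentiate the unbiasedness condition along the manifold. Writing unbiasedness as $\int \hat{\btheta}\, p(\by;\btheta(\bphi))\, d\by = \btheta(\bphi)$ and differentiating with respect to $\bphi^T$, the chain rule identifies the projected score $\partial \log p/\partial\bphi = \bU^T\bg$, yielding
\[
\mathbb{E}\big\{(\hat{\btheta}-\btheta)\,\bg^T\bU\big\} = \bU,
\]
where subtracting $\btheta$ is legitimate because $\mathbb{E}\{\bg\} = \bzero$. This identity is the constrained analogue of the unconstrained orthogonality relation $\mathbb{E}\{(\hat{\btheta}-\btheta)\bg^T\} = \bI$, now with variations restricted to $\range(\bU)$. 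I would then assemble the joint second-moment matrix of the estimation error $\be \equiv \hat{\btheta}-\btheta$ and the projected score $\bU^T\bg$, namely
\[
\begin{bmatrix} \bSigma_{\hat{\theta}} & \bU \\ \bU^T & \bU^T\bF\bU \end{bmatrix} \succeq 0,
\]
whose off-diagonal blocks follow from the identity above and whose $(2,2)$ block is $\mathbb{E}\{\bU^T\bg\bg^T\bU\} = \bU^T\bF\bU$. Being a covariance matrix it is positive semidefinite, so, assuming $\bU^T\bF\bU$ is nonsingular, the Schur complement of its $(2,2)$ block gives $\bSigma_{\hat{\theta}} - \bU(\bU^T\bF\bU)^{-1}\bU^T \succeq 0$, which is exactly the claimed bound.

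The main obstacle is the regularity bookkeeping rather than the linear algebra. I must justify that the local reparametrization exists and is continuously differentiable, that differentiation under the integral sign is valid, and, most importantly for the stated form, that $\bU^T\bF\bU$ is invertible, this being the constrained counterpart of a nonsingular Fisher information matrix and precisely the condition under which the bound ``exists.'' As a coordinate-free cross-check that sidesteps the explicit reparametrization, I would note that one may work directly with the tangent-projected score $\bU(\bU^T\bF\bU)^{-1}\bU^T\bg$ and verify the same semidefinite inequality, recovering the identical expression $\bU(\bU^T\bF\bU)^{-1}\bU^T$.
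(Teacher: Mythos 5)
Your proposal is correct and follows essentially the same route as the paper, which does not reproduce an argument but simply cites \cite[Theorem 1]{stoica1998}: your reconstruction (differentiating unbiasedness along the constraint manifold to obtain $\mathbb{E}\{(\hat{\btheta}-\btheta)\bg^T\}\bU = \bU$, then applying the Schur-complement/covariance inequality to the joint second-moment matrix of the error and the projected score $\bU^T\bg$) is precisely the classical Stoica--Ng proof being invoked. Your additional bookkeeping --- the implicit-function-theorem justification of the local parametrization under the full-row-rank hypothesis on $\bC(\btheta)$, and the observation that invertibility of $\bU^T\bF\bU$ is exactly the condition under which the bound ``exists'' --- is consistent with the regularity assumptions of that reference, so there is nothing to repair.
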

\begin{proof}
\cite[Theorem 1.]{stoica1998}
\end{proof}

\begin{theorem}[Optimal constraint set] \label{thm: optimal_constraint} Any feasible set of linearly independent vectors which span the nullspace of the FIM forms an optimal constraint set.
\end{theorem}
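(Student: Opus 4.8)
The plan is to drive the constrained bound of Theorem~\ref{thm: ccrb} to its smallest value by aligning the constraint gradient with the nullspace of the FIM. Throughout, let $\bF$ be $n\times n$ of rank $r<n$, let $\cN(\bF)$ denote its nullspace of dimension $k=n-r$, and note that $\bU(\bU^T\bF\bU)^{-1}\bU^T$ is unchanged when $\bU$ is replaced by $\bU\bT$ for any invertible $\bT$. Hence the bound depends on the constraints only through the subspace $\cN(\bC)=\range(\bU)$, and I would argue entirely at the level of that subspace.

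First I would translate the hypothesis into a statement about $\bU$. If the $k$ linearly independent constraint vectors span $\cN(\bF)$, I take them as the rows of $\bC$, so that $\range(\bU)=\cN(\bC)=\cN(\bF)^{\perp}=\range(\bF)$, the last step using $\bF=\bF^T$. Choosing $\bU$ as an orthonormal basis of $\range(\bF)$ and invoking the compact eigendecomposition $\bF=\bU\bLambda\bU^T$ with $\bLambda\succ\bzero$ gives $\bU^T\bF\bU=\bLambda$, whence $\bU(\bU^T\bF\bU)^{-1}\bU^T=\bU\bLambda^{-1}\bU^T=\bF^\dagger$. Thus the bound in \eqref{eq:CCRB} is attained by the pseudoinverse $\bF^\dagger$, which is exactly the covariance of an efficient estimator, and this identifies the constraint set as optimal.

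It remains to show that no other feasible constraint set does better. I would expand an arbitrary valid $\bU$ in the eigenbasis of $\bF$: writing $\bF=\bV_1\bLambda_1\bV_1^T$ with $\range(\bV_1)=\range(\bF)$ and $\range(\bV_2)=\cN(\bF)$, and setting $\bA=\bV_1^T\bU$, $\bB=\bV_2^T\bU$, $\bP=\bB\bA^{-1}$, a short computation gives
\begin{equation}
\bU(\bU^T\bF\bU)^{-1}\bU^T=\bF^\dagger+\bV_1\bLambda_1^{-1}\bP^T\bV_2^T+\bV_2\bP\bLambda_1^{-1}\bV_1^T+\bV_2\bP\bLambda_1^{-1}\bP^T\bV_2^T,
\end{equation}
where $\bA$ is invertible precisely because $\bU^T\bF\bU$ is nonsingular (the feasibility/identifiability condition). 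Since $\bV_1^T\bV_2=\bzero$ the two cross terms are trace-free, so $\tr\{\bU(\bU^T\bF\bU)^{-1}\bU^T\}=\tr\{\bF^\dagger\}+\tr\{\bP\bLambda_1^{-1}\bP^T\}\ge\tr\{\bF^\dagger\}$, with equality iff $\bP=\bzero$, that is iff $\range(\bU)=\range(\bF)$. I expect the main obstacle to be exactly this optimality half rather than the identity $\bU(\bU^T\bF\bU)^{-1}\bU^T=\bF^\dagger$: the difference $\bU(\bU^T\bF\bU)^{-1}\bU^T-\bF^\dagger$ is in general \emph{indefinite}, so minimality cannot be phrased as a blanket positive-semidefinite ordering. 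The care required is to observe that every valid constraint yields the same bound on the identifiable directions $\range(\bF)$, while the nonnegative excess lives wholly in the nullspace directions and is annihilated only when the constraints span $\cN(\bF)$; measuring that excess through the trace makes the claim precise and completes the argument.
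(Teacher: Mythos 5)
Your proof is correct, and its first half is essentially the paper's argument: substitute an orthonormal basis of $\range(\bF)$ for $\bU$ in (\ref{eq:CCRB}) --- legitimate since $\bU(\bU^T\bF\bU)^{-1}\bU^T$ depends only on the subspace $\range(\bU)=\cN(\bC)=\cN(\bF)^{\perp}$ --- and read off $\trace\big[\bU(\bU^T\bF\bU)^{-1}\bU^T\big]=\trace(\bLambda^{-1})=\trace(\bF^{\dagger})$. Where you genuinely depart from the paper is the optimality half: the paper simply asserts that the unconstrained FIM yields the lowest achievable variance and points to \cite{rajan2015joint}, whereas you prove minimality, expanding an arbitrary feasible $\bU=\bV_1\bA+\bV_2\bB$ in the eigenbasis of $\bF$ and showing the bound equals $(\bV_1+\bV_2\bP)\bLambda_1^{-1}(\bV_1+\bV_2\bP)^T$ with $\bP=\bB\bA^{-1}$, so that the cross terms are trace-free and the excess is $\trace(\bP\bLambda_1^{-1}\bP^T)\ge 0$, vanishing iff $\range(\bU)=\range(\bF)$; I checked this computation and it is sound, including the observation that $\bA$ is invertible precisely when $\bU^T\bF\bU=\bA^T\bLambda_1\bA$ is nonsingular, i.e., when the CCRB of Theorem~\ref{thm: ccrb} exists. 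Your route buys two things the paper's proof does not: a self-contained proof with a uniqueness characterization of the optimal constraint subspace, and the sharp remark that the excess matrix is in general indefinite, so optimality cannot be phrased as a positive-semidefinite (Loewner) ordering --- which retroactively justifies the paper's use of $\trace(\bSigma_{\hat{\theta}})$ as the comparison criterion. One caveat worth stating explicitly: because $\bA$ must be square, your comparison class consists of constraint sets of the minimal cardinality $N-\rank(\bF)$, exactly those for which $\bU^T\bF\bU$ is invertible; this matches the paper's setting, which separately notes that enlarging the constraint set beyond what identifiability requires would lower the bound below $\trace(\bF^{\dagger})$.
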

\begin{proof} Let the spectral decomposition of the FIM be $\bF = \bV\bLambda\bV^T$ where $\bLambda$ is a diagonal matrix containing the non-zero eigenvalues, and let $\bV$ contain the corresponding eigenvectors. Now, consider a constraint matrix $\bar{\bC}$, such that the orthogonal basis for the null space of $\bar{\bC}$ spans the range of $\bV$, then substituting for $\bU$ with $\bV$ in (\ref{eq:CCRB}),we have \begin{equation}
\trace\big(\bSigma_{\theta}\big)
= \trace\big[\bV(\bV^T\bF\bV)^{-1}\bV^T \big] \nonumber \\
\overset{(a)}{=} \trace\big[\bLambda^{-1}\big] \equiv \trace\big(\bF^\dagger \big)
\end{equation} where we substitute for $\bF= \bV\bLambda\bV^T$ to obtain (a), exploit the cyclic nature of the trace operator, and the orthonormal property of $\bV$. Observe that the unconstrained FIM yields the lowest achievable variance for any unbiased estimator, and thus $\bar{\bC}$ is an optimal constraint set \cite{wijnholdsConstrained06,rajan2015joint}.
\end{proof}





\section{Data Model}
\subsection{Blind sensor calibration} Consider a network of $N$ sensor nodes, where each sensor collects $K \ge N$ measurements. Let the \emph{true} signal impinging on the $N$ sensors be given by a $K \times N$ matrix $\bS = [\bs_1, \bs_2, \hdots, \bs_N] $, where the ground-truth $\bS$ lies in a lower dimensional subspace, say $r$, such that $r < N$. Furthermore, let $\bGamma$ be a known $(N-r) \times N$ projection matrix of rank $r$, which spans the orthogonal complement of the signal subspace, such that $\bS\bGamma^T= \bzero$, then using (\ref{eq:vec_ABC}), we have \begin{equation}
     (\bGamma \otimes \bI_K)vec(\bS)= \bar{\bGamma}\bs=\bzero, \label{eq:known_subspace}
\end{equation} which is the underlying premise for blind calibration in sensor networks \cite{ling2018self, balzano2007blind,dorffer2016blind,stankovic2018consensus}. 

We now consider a scenario where the $N$ sensors are plagued with offset errors, and the measurements are corrupted with noise. The measurements of the $n$th sensor ($1 \le n \le N$) are denoted by  $\by_n = \bs_n + \theta_n\b1_K + \bEta_n$, where $\bs_n$ is the \emph{true} signal of length $K$, $\theta_n$ is the unknown sensor offset and $\bEta_n$ is the stochastic noise on the measurements. Extending for all $N$ sensors we have the following data model \begin{equation}
\label{eq:y_model}
\by= \bs + \bH\btheta + \bEta,
\end{equation} where $\by = [\by^T_1,\by^T_2,\hdots,\by^T_N]^T$, $\bs = [\bs^T_1,\bs^T_2,\hdots,\bs^T_N]^T$, $ \btheta = [\theta_1, \theta_2, \hdots, \theta_N]^T$, $\bH \equiv \bI_N \otimes \b1_K$, and $\bEta = [\bEta^T_1,\bEta^T_2,\hdots,\bEta^T_N]^T$. Furthermore, let the noise on the system (\ref{eq:y_model}) be zero-mean Gaussian, where the covariance function is homogeneous \ie wide sense stationary across the $K$ measurements \ie $\bEta \sim \cN(\bzero, \bSigma \otimes \bI_K)$. Then exploiting the assumption (\ref{eq:known_subspace}), multiplying (\ref{eq:y_model}) by the known $\bar{\bGamma}$, and rearranging the terms we have \begin{equation}
    \bar{\bGamma}\by  
    \sim \cN\big(\bar{\bGamma}\bH\btheta,
    \bar{\bGamma}(\bSigma \otimes \bI_K )\bar{\bGamma}^T \big).
    \label{eq:datamodel}
\end{equation}The Fisher information (FIM) for the above data model is straightforward \cite{Kay1993}, and is given by \begin{equation}
\bF 
= \bH^T\bar{\bGamma}^T(\bGamma\bSigma\bGamma^T \otimes \bI_{K})^{-1}\bar{\bGamma}\bH 
= K\bGamma^T(\bGamma\bSigma\bGamma^T)^{-1}\bGamma
    \label{eq:FIM_blindcal}
\end{equation} where we substitute for $\bar{\bGamma}= \bGamma \otimes \bI_K$, $\bH= \bI_N \otimes \b1_K$ and use the relation (\ref{eq:prod_of_kron_prod}). Note that the FIM is also rank deficient, since $\bGamma$ is rank deficient. Let $\bar{\bC}$ span the nullspace of the FIM, then from Theorem~\ref{thm: optimal_constraint}, the optimal parametric constraint set is given by $\bar{\bC}^T\btheta = \bd$, where $\bd$ is a known response vector of length $N-r$. If sufficient data is collected \ie $K\ge N$, and if the covariance $\bSigma$ is known, then $\bar{\bC}$ can be designed, which would lead to a \emph{data-driven} reference.
\subsection{Single source}
In the following section, we look at a special case of (\ref{eq:datamodel}), where the sensors measure an identical signal \eg densely deployed sensor array measuring air quality \cite{spinelle2017field}. In these scenarios, the signal subspace for the $k$th measurement  ($\forall\ k \le K$) across all $N$ nodes, spans a column vector of ones i.e., $\b1_N$. Subsequently, the projection matrix takes the form $\bGamma\equiv \bU^T_2= \begin{bmatrix}
    -\b1_{N-1} & \bI_{N-1}
\end{bmatrix}^T$, and thus (\ref{eq:datamodel}) simplifies to \begin{equation}
    \bz \equiv \bar{\bU}^T_2\by \sim\ 
    \cN\big(\bar{\bU}^T_2\bH\btheta,\bU^T_2\bSigma\bU_2 \otimes \bI_K\big),
    \label{eq:single_source_datamodel}
\end{equation} where $\bar{\bU}_2 = \bU_2 \otimes \bI_K$, and thus the FIM from (\ref{eq:FIM_blindcal}) is \begin{equation}
    \bF
    = K\bU_2(\bU^T_2\bSigma\bU_2)^{-1}\bU^T_2.
    \label{eq:FIM_rank1}
\end{equation} Observe that since $\bU_2$ is rank deficient, $\bF$ is also rank deficient by least $r=1$, which is intuitively expected, since at least $1$ reference is needed to uniquely estimate all the sensor offsets. 

\subsection{Constraints} A direct solution to estimate the offset is to arbitrarily assume one of the sensors as a reference, which without a loss of generality, we choose as sensor $1$. The corresponding parametric constraint, the gradient vector and the orthonormal bases for the nullspace of the gradient vector are given as \begin{equation}
    \theta_1= 0, \quad 
    \bc_1 = [1,\bzero_{N-1}^T]^T, \quad
    \bU_1 = \begin{bmatrix}
    \bzero^T_{N-1} \\
    \bI_{N-1}
    \end{bmatrix},\quad 
    \label{eq:constraints_1}
\end{equation} respectively, where we use the subscript $1$ to denote a \emph{single} reference. Alternatively, the average of all unknown offsets could be used as a reference, which has been proposed in various applications \cite{wijnholdsConstrained06,rajan2015joint,rajan2018reference}. The constraint, the corresponding gradient and the bases for the nullspace of the gradient are then given as \begin{equation}
    \frac{1}{N}\sum^{N}_{n=1}\theta_n = 0, \quad 
    \bc_2 = N^{-1}\b1_N, \quad
    \bU_2 = \begin{bmatrix}
    -\b1^T_{N-1} \\
    \bI_{N-1}
    \end{bmatrix},\quad 
    \label{eq:constraints_2}
\end{equation} respectively, where the subscript $2$ is used to denote the \emph{average} reference. Increasing the number of constraints would further improve the performance of any estimator, however, we limit our discussion in this paper to a single constraint, which is sufficient for identifiabilty.

\subsection{Effect on sensor bias} If the network is calibrated, then observe that $\theta_n=0\ \forall n\le N$. However, by choosing a reference \eg sensor $1$, we are forcing the corresponding offset to $0$ \ie $\theta_1 = 0$, and thus we implicitly introduce a bias $\phi_1$ into our final estimator, which reflects the unidentifiable offset of sensor $1$. Along similar lines, (\ref{eq:constraints_2}) introduces a bias $N^{-1}\sum^{N}_{n=1}\phi_n$, where $\phi_n$ is the implicit bias of the $n$th sensor. Now, if the mean of the underlying offsets are centered around $0$, then as $N \rightarrow \infty$, observe that (\ref{eq:constraints_2}) minimizes the overall network bias. On the contrary, by using the single reference (\ref{eq:constraints_1}), we rely entirely on the performance of sensor $1$, which could lead to a single point of failure. In the next section we study the effect of these constraints on the variance of an unbiased estimator for $\btheta$.

\section{Optimal constraint} We now aim to compare the performance of any constrained unbiased estimator for (\ref{eq:single_source_datamodel}), under varying assumptions of the covariance $\bSigma$. More concretely, let $\bSigma_1$ and $\bSigma_2$ denote the lower bound on the variance of any unbiased estimator under the constraints (\ref{eq:constraints_1}) and (\ref{eq:constraints_2}) respectively, then using (\ref{eq:CCRB}), we aim to evaluate and analyze 
\begin{subequations}
\begin{align}
\trace(\bSigma_1)
=& \trace\big[(\bU^T_1\bF \bU_1)^{-1} \big] \label{eq:trace_ccrb_1} \\
\trace(\bSigma_2) 
=&\trace\big[(\bU^T_2\bF \bU_2)^{-1}\bU^T_2\bU_2\big] \label{eq:trace_ccrb_2}
\end{align} \label{eq:trace_ccrb}\end{subequations}  where we exploit the cyclic nature of the trace operator, and the property $\bU_1^T\bU_1=\bI$ in (\ref{eq:trace_ccrb_1}). To this end, we have the following theorem.

\begin{theorem}[Optimal reference] \label{thm:optimal_ref} Consider an network of $N$ sensors, where each sensor collects $K$ measurements based on the data model (\ref{eq:single_source_datamodel}), then the following statements hold.
\begin{itemize}
    \item[(a)] Heteroscedasticity: The optimal reference for estimating the unknown sensor offsets is the \emph{average} reference (\ref{eq:constraints_2})
    \item[(b)] Homoscedasticity:  If $\bSigma= \sigma^2\bI$, the optimal reference for estimating the unknown offsets, outperforms any arbitrarily chosen reference in the network, by a factor $2$.
    \item[(c)] Independent, but not identical: In the special case of \begin{equation}
        \bSigma= \bdiag(\sigma^2_1, \sigma^2_2, \hdots, \sigma^2_N),
        \label{eq:diagonal_cov_matrix}
    \end{equation} the variance of the optimal unbiased estimator outperforms any arbitrarily chosen sensor reference by \begin{equation} \dfrac{N}{K}\dfrac{\sigma_i^2 \sum^N_{n=1,n\ne i}\sigma^2_n}{\sigma_i^2 + \sum^N_{n=1,n\ne i}\sigma^2_n},
\end{equation} where $\sigma_i^2$ is the variance of the chosen sensor reference.
\end{itemize} 
\end{theorem}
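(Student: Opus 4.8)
The plan is to reduce the claim to two explicit trace evaluations and subtract them. First I would use part~(a): because the gradient of the \emph{average} constraint (\ref{eq:constraints_2}) spans $\ker\bF$, that constraint is optimal, so the bound it attains is the Moore--Penrose pseudoinverse, $\bSigma_2 = \bF^\dagger$ and $\trace(\bSigma_2) = \trace(\bF^\dagger)$. Then I would make the Fisher information (\ref{eq:FIM_rank1}) explicit for the diagonal covariance (\ref{eq:diagonal_cov_matrix}). Applying the projection identity $\bGamma^T(\bGamma\bSigma\bGamma^T)^{-1}\bGamma = \bSigma^{-1} - \bSigma^{-1}\bb(\bb^T\bSigma^{-1}\bb)^{-1}\bb^T\bSigma^{-1}$ with $\bGamma = \bU_2^T$ and $\bb = \b1_N$ (which spans $\ker\bU_2^T$) gives
\[
\bF = K\Big(\bSigma^{-1} - \frac{\bSigma^{-1}\b1_N\b1_N^T\bSigma^{-1}}{\b1_N^T\bSigma^{-1}\b1_N}\Big),
\]
a diagonal-minus-rank-one matrix with $\ker\bF = \range(\b1_N)$, matching the rank deficiency noted after (\ref{eq:FIM_rank1}).

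Second, I would evaluate $\trace(\bSigma_1)$ for the single reference (\ref{eq:constraints_1}), taking an arbitrary sensor $i$ as reference after relabeling. Here $\bU_1^T\bF\bU_1$ is the principal submatrix of $\bF$ with its $i$th row and column deleted, which is itself diagonal-minus-rank-one; inverting it with the Sherman--Morrison identity (\ref{eq:sherman-morrison}) collapses the rank-one correction into an all-ones block, and (\ref{eq:trace_ccrb_1}) yields the clean intermediate value $\trace(\bSigma_1) = K^{-1}\big[\sum_{n\ne i}\sigma_n^2 + (N-1)\sigma_i^2\big]$.

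Third, I would compute $\trace(\bSigma_2)$ directly from (\ref{eq:trace_ccrb_2}) rather than forming $\bF^\dagger$ explicitly. With $\bP \equiv \bU_2^T\bU_2 = \bI_{N-1} + \b1\b1^T$ and $\bM \equiv \bU_2^T\bSigma\bU_2$, the factorisation in (\ref{eq:FIM_rank1}) gives $\bU_2^T\bF\bU_2 = K\bP\bM^{-1}\bP$, so that $\trace(\bSigma_2) = K^{-1}\trace(\bP^{-1}\bM)$; since $\bP^{-1} = \bI_{N-1} - N^{-1}\b1\b1^T$ by (\ref{eq:sherman-morrison}) and $\bM$ is again diagonal-plus-rank-one, this reduces to an elementary sum of the $\sigma_n^2$. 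Subtracting the two traces gives the improvement, and the final task is to simplify $\trace(\bSigma_1) - \trace(\bSigma_2)$ into the stated closed form.

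I expect the main obstacle to be exactly this last algebraic reduction. A cleaner route that isolates it is to observe that re-referencing is a linear change of frame, $\bSigma_1 = (\bI - \b1\be_i^T)\bF^\dagger(\bI - \be_i\b1^T)$; since $\bF^\dagger\b1_N = \bzero$ the cross terms vanish and $\trace(\bSigma_1) - \trace(\bSigma_2) = N[\bF^\dagger]_{ii}$, so the whole question collapses to evaluating a single diagonal entry of the pseudoinverse and confirming it equals the claimed quantity. Throughout I would use the homoscedastic specialisation $\sigma_n^2 = \sigma^2$ as a consistency check, since the difference must then collapse to the factor-$2$ gap established in part~(b), which guards against sign and bookkeeping errors in the rank-one corrections.
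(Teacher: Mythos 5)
Your machinery is sound, and in places cleaner than the paper's own. Your two trace evaluations agree exactly with the paper's (\ref{eq:trace_hetro_1})--(\ref{eq:trace_hetro_2}): your $\trace(\bSigma_1)=K^{-1}\big[\sum_{n\ne i}\sigma_n^2+(N-1)\sigma_i^2\big]$ is precisely $K^{-1}\trace(\bOmega)$ with $\bOmega=\bar{\bSigma}+\sigma_i^2\b1\b1^T$, and your $\trace(\bSigma_2)=K^{-1}\trace(\bPsi^{-1}\bOmega)$ is the paper's expression. Your re-referencing identity is also valid and is a genuinely different device: writing $\bT=\bI-\b1\be_i^T$, one has $\bSigma_1=\bT\bF^\dagger\bT^T$ because $\bT$ is the oblique projector onto $\ker(\be_i^T)$ along $\spann(\b1)$ and $\bF\bT=\bF$ whenever $\b1$ spans $\ker\bF$; since $\bF^\dagger\b1=\bzero$, this gives $\trace(\bSigma_1)-\trace(\bSigma_2)=N[\bF^\dagger]_{ii}$, which is more transparent than the paper's direct subtraction via $\bI_{N-1}-\bPsi^{-1}=N^{-1}\b1_{N-1}\b1^T_{N-1}$.

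The gap is exactly the step you deferred: the final ``confirmation'' fails. Carrying your own plan to completion yields
\begin{equation*}
\trace(\bSigma_2)=\frac{N-1}{NK}\sum_{n=1}^{N}\sigma_n^2,
\qquad
\trace(\bSigma_1)-\trace(\bSigma_2)=N[\bF^\dagger]_{ii}=\frac{1}{NK}\Big[(N-1)^2\sigma_i^2+\sum_{n\ne i}\sigma_n^2\Big],
\end{equation*}
an arithmetic-mean-type quantity, whereas Theorem~\ref{thm:optimal_ref}(c) asserts the harmonic-mean-type $\frac{N}{K}\,\sigma_i^2\sum_{n\ne i}\sigma_n^2/(\sigma_i^2+\sum_{n\ne i}\sigma_n^2)$. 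The two coincide only when the variances are equal. A concrete check: $N=2$, $K=1$, $\sigma_1^2=1$, $\sigma_2^2=3$; with reference sensor $1$, $\hat\theta_2=y_2-y_1$ gives total variance $4$, while the average constraint gives $\hat\theta_{1,2}=\mp(y_2-y_1)/2$ and total variance $2$, so the true gap is $2$ --- but the stated formula gives $2\cdot 3/4=1.5$. So your plan, executed faithfully, refutes rather than proves the printed closed form. Two further remarks. First, your proposed safeguard --- specializing to $\sigma_n^2=\sigma^2$ and matching the factor-$2$ of part (b) --- is powerless here, since both the correct expression and the claimed one collapse to $(N-1)\sigma^2/K$ in that case. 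Second, the paper's own proof stops at exactly the same point: it asserts $\trace(\bSigma_1)-\trace(\bSigma_2)=NK^{-1}\trace(\bOmega\b1_{N-1}\b1^T_{N-1})$ (itself a typo, since $\trace\big(\bOmega(\bI-\bPsi^{-1})\big)$ produces the factor $N^{-1}K^{-1}$, not $NK^{-1}$) and waves at ``further manipulations.'' But $N^{-1}K^{-1}\b1^T\bOmega\b1=\frac{1}{NK}\big[(N-1)^2\sigma_1^2+\sum_{n\ge 2}\sigma_n^2\big]$, which does not simplify to (\ref{eq:hetero_factor}). You have therefore reproduced the paper's derivation up to its final line; the deficiency lies in that final reduction, both in your proposal and in the paper, and the expression printed in Theorem~\ref{thm:optimal_ref}(c) is inconsistent with the CCRBs (\ref{eq:trace_ccrb}) it is meant to summarize.
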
 

    

\begin{proof}(Proof of Theorem~\ref{thm:optimal_ref}(a)) Observe from (\ref{eq:FIM_rank1}) and (\ref{eq:constraints_2}) that $\bF\bc_2= \bzero$, since $\b1_N$ spans the nullspace of $\bU^T_2$, and hence using Theorem~\ref{thm: optimal_constraint}, the \emph{average} (\ref{eq:constraints_2}) is the optimal reference. In other words, (\ref{eq:constraints_2}) yields a minimum variance unbiased estimator (MVUE) for the data model (\ref{eq:single_source_datamodel}).
\end{proof}

In the following sections, we present the proofs of Theorem~\ref{thm:optimal_ref}(b) and Theorem~\ref{thm:optimal_ref}(c).

\subsection{Homoscedasticity}
\begin{proof}(Proof of Theorem~\ref{thm:optimal_ref}(b)) The optimal reference for estimating the sensor offsets is the \emph{average} of all the unknown offsets as proved in Theorem~\ref{thm:optimal_ref}(a). Hence, when $\bSigma= \sigma^2\bI$, it suffices to show \begin{equation}
    \delta\equiv 
    \trace(\bSigma_2)/\trace(\bSigma_1)= 0.5,
    \label{eq:delta_homo_factor}
\end{equation} for a homoscedastic system, where $\trace(\bSigma_1)$ and $\trace(\bSigma_2)$ are given in (\ref{eq:trace_ccrb}). In this scenario, the FIM in (\ref{eq:FIM_rank1}) simplifies to \begin{equation}
    \bF
    = \sigma^{-2} K \bU_2(\bU^T_2\bU_2)^{-1}\bU^T_2
    = \sigma^{-2} K \bU_2\bPsi^{-1}\bU^T_2 
    \label{eq:FIM_rank1_homo}
\end{equation} where we define \begin{subequations}
\begin{align}
    \bPsi \equiv\ & \bU^T_2\bU_2 = \bI_{N-1} + \b1_{N-1}\b1^T_{N-1},
    \label{eq:U2U2} \\
    \bPsi^{-1} \equiv\ & (\bU^T_2\bU_2)^{-1}= \bI_{N-1} - N^{-1}\b1_{N-1}\b1^T_{N-1}, 
    \label{eq:U2U2_inv}
\end{align}
\label{eq:PSI}
\end{subequations}and use (\ref{eq:sherman-morrison}) to obtain (\ref{eq:U2U2_inv}). Now, substituting the FIM (\ref{eq:FIM_rank1_homo}) and $\bU_1$  (\ref{eq:constraints_1}) in (\ref{eq:trace_ccrb_1}), the CCRB for a single reference is  \begin{align}
\trace(\bSigma_1) 
=&\ \frac{\sigma^2}{K}\trace\bigg[(\bU^T_1\bU_2\bPsi^{-1}\bU^T_2\bU_1)^{-1}\bigg] 
\overset{(a)}{=} \frac{\sigma^2}{K}\trace(\bPsi), \nonumber \\
\overset{(b)}{=}&\ \frac{\sigma^2}{K}\trace[\bI_{N-1} + \b1_{N-1}\b1^T_{N-1}] 
= \frac{2\sigma^2}{K}(N-1). 
\label{eq:trace_homo_1}
\end{align} where we use the property $\bU^T_1\bU_2 = \bI$ in (a), and substitute for $\bPsi$ (\ref{eq:U2U2}) to obtain (b) in (\ref{eq:trace_homo_1})
). Along similar lines, the CCRB for the average reference (\ref{eq:constraints_2}) is obtained by substituting for the FIM (\ref{eq:FIM_rank1_homo}) in (\ref{eq:trace_ccrb_2}), which yields \begin{align}
\trace(\bSigma_2) 
=&\ \frac{\sigma^2}{K}\trace\big[(\bU^T_2(\bU_2\bPsi^{-1}\bU^T_2)\bU_2)^{-1} \bU^T_2\bU_2\big] \nonumber \\
=&\ \frac{\sigma^2}{K}\trace(\bPsi)= \frac{\sigma^2}{K}(N-1),
\label{eq:trace_homo_2}
\end{align} where we use the definitions (\ref{eq:PSI}). Finally, from (\ref{eq:trace_homo_1}) and (\ref{eq:trace_homo_2}), we have (\ref{eq:delta_homo_factor}), and hence proved.
\end{proof}

\subsection{Independent, but not identical} We now consider the scenario  (\ref{eq:diagonal_cov_matrix}), where without loss of generality, we assume $\sigma^2_1 \le \sigma^2_2 \le \hdots \le \sigma^2_N$. Following immediately, sensor $1$ (with the lowest variance) is an appropriate reference for calibration. To this end, the constraints (\ref{eq:constraints_1}) hold, and thus proving Theorem~\ref{thm:optimal_ref}(c) is equivalent to showing \begin{equation}
    \trace(\bSigma_1) - \trace(\bSigma_2) 
    = \dfrac{N}{K}\dfrac{\sigma_1^2 \sum^N_{n=2}\sigma^2_n}{\sigma_1^2 + \sum^N_{n=2}\sigma^2_n}.
    \label{eq:hetero_factor}
\end{equation}

\begin{figure*}[t]
    \centering
    \subfigure[]{\includegraphics[width=60mm]{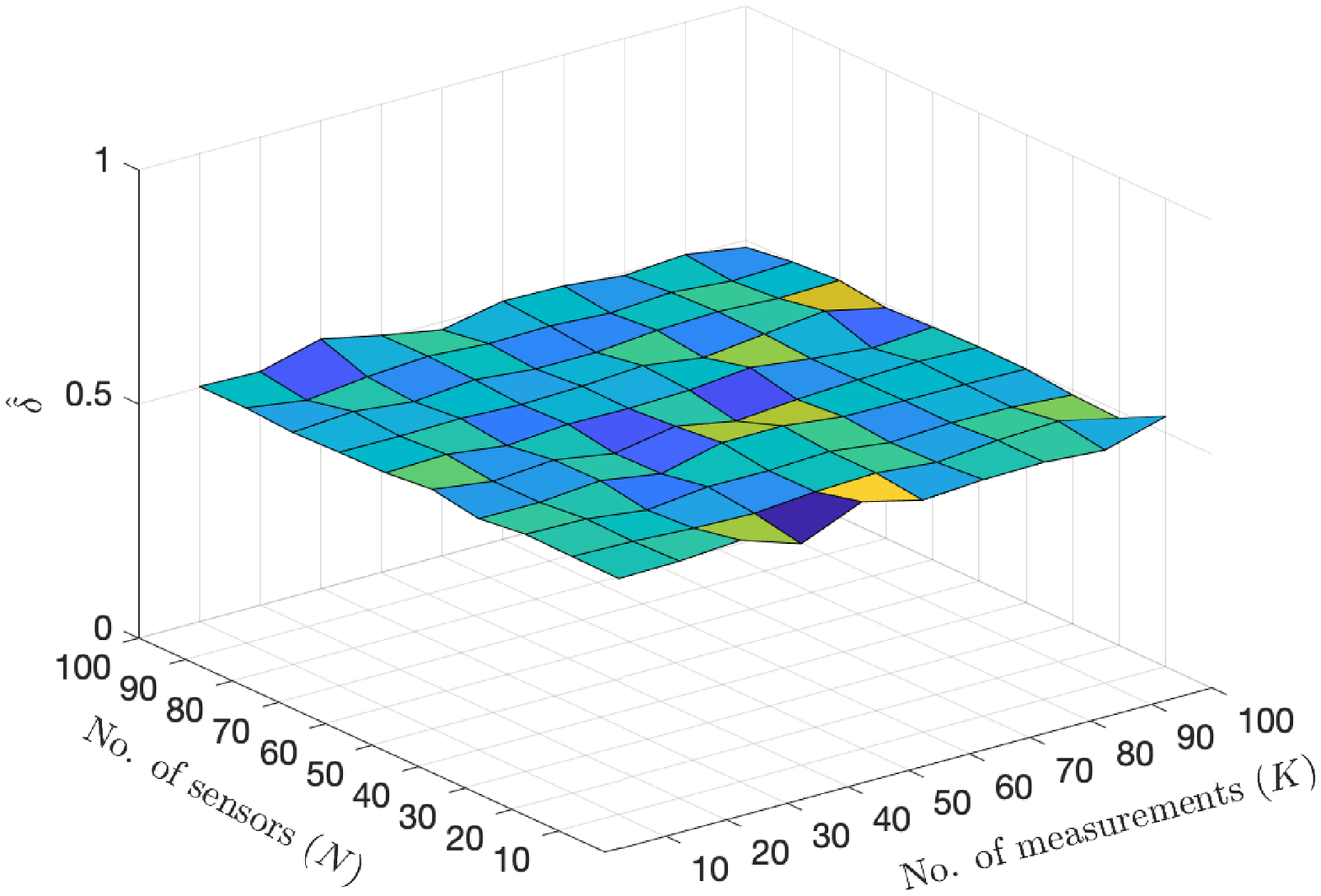}}
    \hspace{3mm}
    \subfigure[]{\includegraphics[width=50mm]{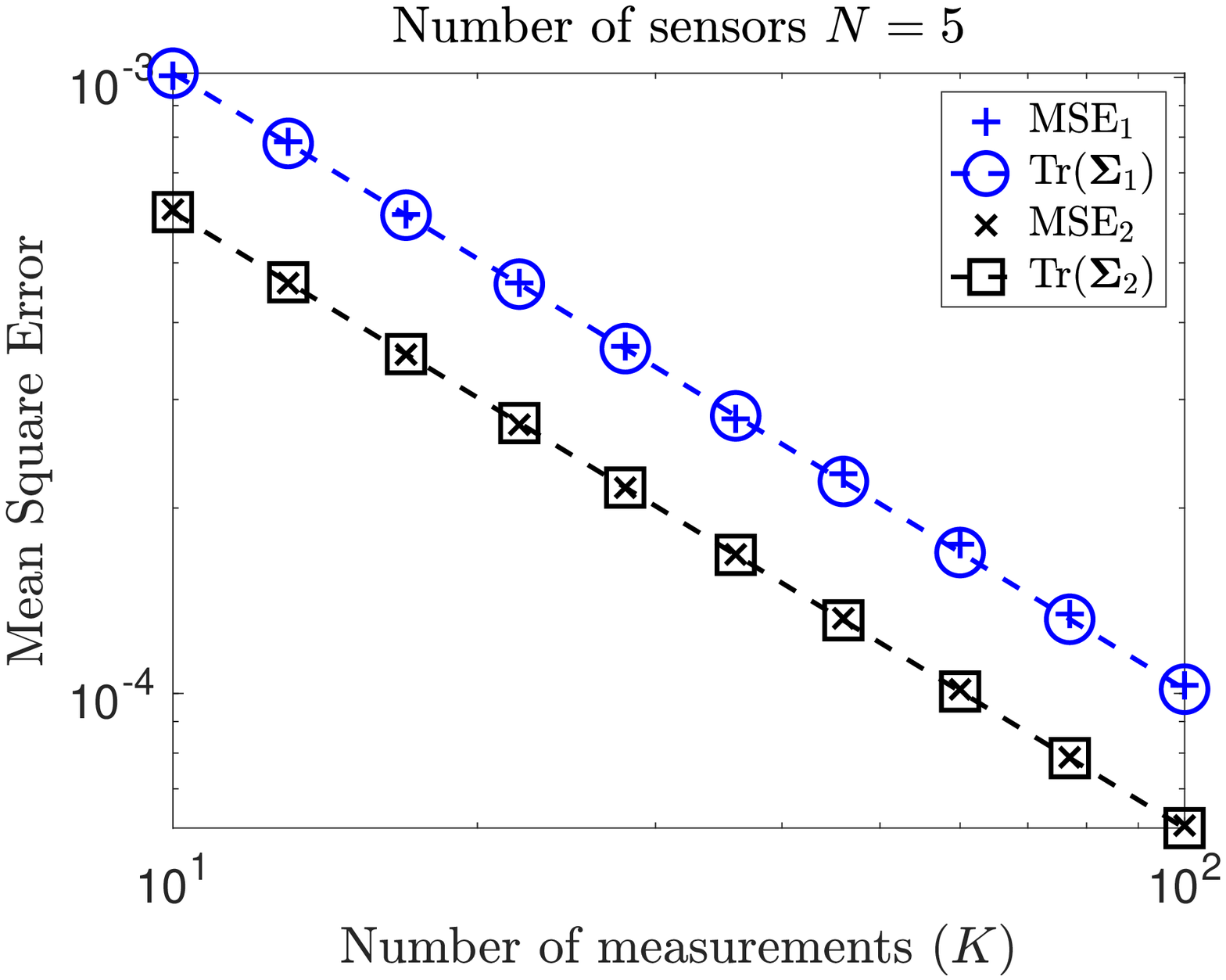}}
    \hspace{3mm}
    \subfigure[]{\includegraphics[width=50mm]{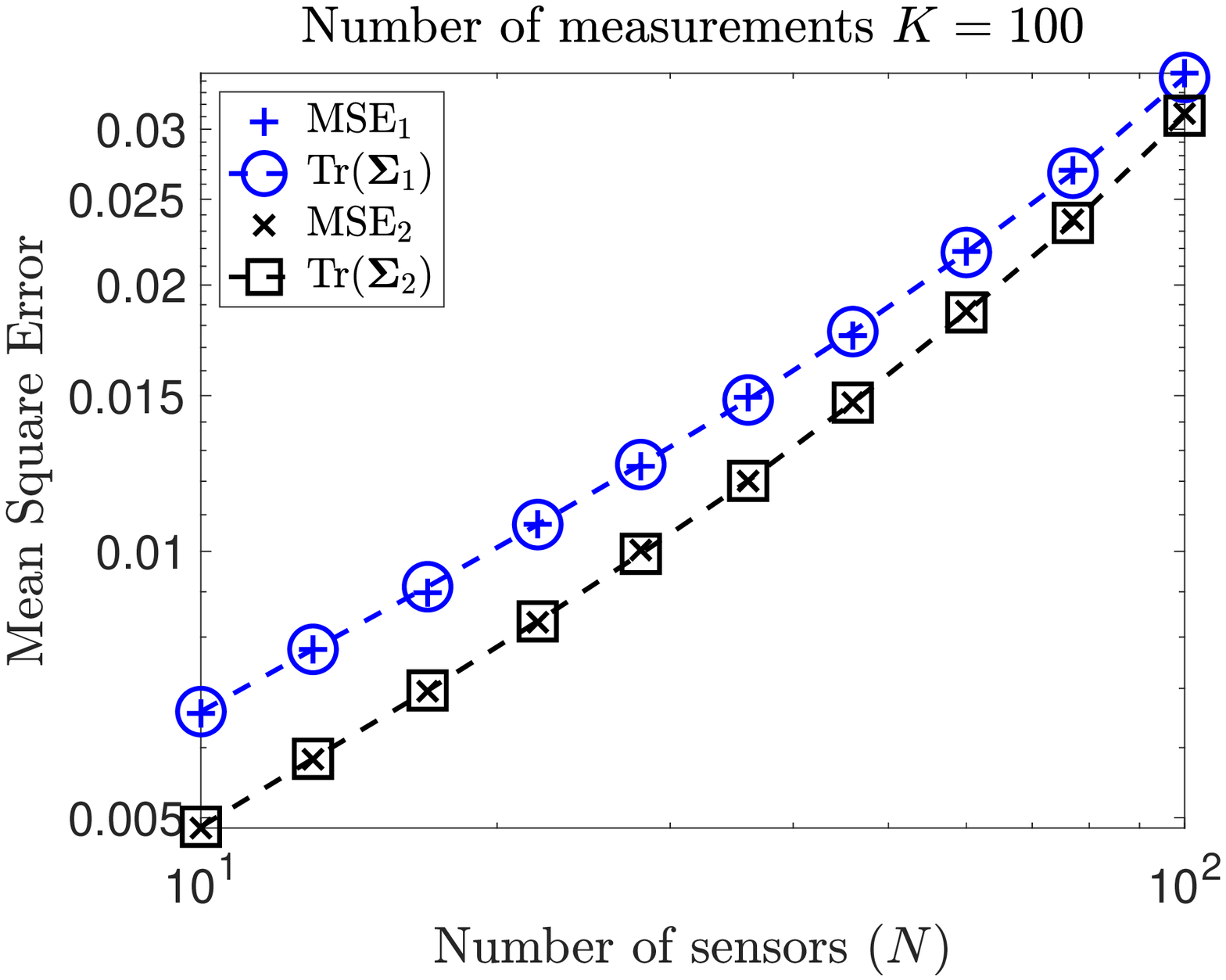}}
    \caption{Figure (a) shows an estimate of (\ref{eq:delta_homo_factor}) \ie $\hat{\delta}= MSE_2/MSE_1$, for varying $K$ and $N$, illustrating the factor $2$ improvement using the \emph{average} reference in contrast to any single reference in the network, under \emph{i.i.d} Gaussian noise. Under the assumption of independent but not identical Gaussian noise, (b) and (c) show that the estimator performance, for varying $K$ (fixed $N$) and varying $N$ (fixed $K$) respectively. The estimators achieve the respective CCRBs, validating the derived lower bounds.}
    \label{fig:all}
\end{figure*}
\begin{proof}(Proof of Theorem~\ref{thm:optimal_ref}(c)) The CCRB for the single reference scenario is obtained by substituting for the FIM (\ref{eq:FIM_rank1}) and $\bU_1$ (\ref{eq:constraints_1}) in (\ref{eq:trace_ccrb_1}), which yields  \begin{align}
    \trace(\bSigma_1) 
    =& K^{-1} \trace\bigg[\big(\bU^T_1\bU_2(\bU^T_2\bSigma\bU_2)^{-1}\bU^T_2\bU_1\big)^{-1}\bigg] \nonumber \\
    =& K^{-1} \trace(\bU^T_2\bSigma\bU_2\big) = K^{-1} \trace(\bOmega)
    \label{eq:trace_hetro_1}
\end{align} where we exploit $\bU^T_1\bU_2 = \bI$ and introduce \begin{equation}
    \bOmega 
    \equiv \bU^T_2\bSigma\bU_2
    = \bar{\bSigma} + \sigma^2_1\b1_{N-1}\b1_{N-1}^T.
    \label{eq:omega}
\end{equation} The CCRB for the average reference (\ref{eq:constraints_2}), using (\ref{eq:trace_ccrb_2}) is, \begin{align}
\trace(\bSigma_2)
=& \trace\bigg[\big(\bU^T_2(\bU_2(\bU^T_2\bSigma\bU_2)^{-1}\bU^T_2\big)\bU_2)^{-1}\bU^T_2\bU_2 \bigg] \nonumber \\
=& K^{-1}\trace\big[(\bPsi\bOmega^{-1}\bPsi)^{-1}\bPsi\big] =\ K^{-1} \trace\big(\bOmega\bPsi\big),
\label{eq:trace_hetro_2}
\end{align} where we use the definitions (\ref{eq:PSI}) and (\ref{eq:omega}). Furthermore, subtracting (\ref{eq:trace_hetro_2}) from (\ref{eq:trace_hetro_1}), we have \begin{align}
\trace(\bSigma_1) - \trace(\bSigma_2)
=&\ NK^{-1}\trace\big(\bOmega\b1_{N-1}\b1^T_{N-1}\big),
\label{eq:trace_hetro_1_2}
\end{align} and substituting for $\bOmega$ from (\ref{eq:omega}),  and further manipulations yield (\ref{eq:hetero_factor}), and hence proved. 
\end{proof}

\section{Simulations : Clock synchronization} We consider a sensor network of $N$ clocks, where each clock is plagued with an offset, and collects $K$ measurements \cite{freris10,rajanCAMSAP11}. Let $\bt=[t_1, t_2, \hdots, t_K]$ denote the \emph{true} time over $K$ instances, and let $\bT_i= [T_{i,1},T_{i,2},\hdots, T_{i,K}]^T$ be the $K$ measurements at the $i$th clock, then we have $\bT_{i} = \bt + \theta_i\b1_K + \bEta_i $, where $\theta_i$ and  $\bEta_i$ are the clock offset and the stochastic noise plaguing the $i$th clock respectively. Let $\bT=[\bT_1,\bT_2,\hdots,\bT_N]^T$ populate all the time measurements from the $N$ sensors, then observe that the data model (\ref{eq:single_source_datamodel}) holds where $\by = vec{(\bT)}$. Subsequently, to estimate the unknown sensor offsets, we aim to minimize the constrained weighted least squares formulation \begin{equation}
    \hat{\btheta} 
    = \argmin_{\btheta}\norm{\bW\bar{\bP}(\by - \bH\btheta)}^2_2
    \quad
    \text{s.t.}\ \bc^T\btheta=\bd,
    \label{eq:cls}
\end{equation} where $\bW$ is a weighting matrix, $\bar{\bP}= \bP \otimes \bI_K$, where $\bP$ is a $N \times N$ centering matrix (i.e., an alternative projection matrix to $\bU^T_2$), $\bc$ is the gradient of the parametric constraint. The cost function has a closed form solution for a non-empty constraint set $\bc$, such that $\begin{bmatrix} \bH^T\bar{\bP}^T & \bc^T\end{bmatrix}^T$ is full rank \cite{boydConvexOptimization}. The estimator (\ref{eq:cls}) under constraints (\ref{eq:constraints_1}) and (\ref{eq:constraints_2}) are denoted by $\hat{\btheta}_1$ and $\hat{\btheta}_2$ respectively. Observe that, with $\bW = \bSigma^{-1/2}$, (\ref{eq:cls}) yields the optimal estimator \ie the minimum variance unbiased estimator for (\ref{eq:single_source_datamodel}), independently for the respective constraints (\ref{eq:constraints_1}) and (\ref{eq:constraints_2}).



We perform experiments to validate the derived lower bounds, using the performance of the optimal estimators (\ref{eq:cls}) under parametric constraints (\ref{eq:constraints_1}) and (\ref{eq:constraints_2}), for varying assumptions on the noise covariance matrix. The performance metric used is the Mean Square Error (MSE) \ie $\text{MSE}_i = N^{-1}_{exp}\trace[(\btheta- \hat{\btheta}_i)(\btheta- \hat{\btheta}_i)^T]$, where $N_{exp}=1000$ is the number of Monte Carlo runs, and $i=1$ or $i=2$ indicate the single reference and the average reference respectively. In case of $\bSigma=\sigma^2\bI$, where $\bsigma^2= 10^{-3}$, we perform a Monte carlo experiment by linearly varying both the number of sensors and measurements from $10$ to $100$. We evaluate an estimate of (\ref{eq:delta_homo_factor}) \ie $\hat{\delta}= MSE_2/MSE_1$ for each parameter set. Figure~\ref{fig:all}(a) visually illustrates the factor $2$ improvement, which is invariant over the number of measurements or sensors, as stated in Theorem~\ref{thm:optimal_ref}(b).

When the underlying noise is independent, but not identical (\ref{eq:diagonal_cov_matrix}), we consider two independent simulations \ie a fixed number of sensors $N=5$, and increasing number of measurements from $K=10$ to $100$ in Figure~\ref{fig:all}(b), and a fixed number of measurements $K=100$, and varying the number of sensors from $N=10$ to $100$ in Figure~\ref{fig:all}(c). Note that the estimated MSEs meet the CCRBs as expected. Furthermore, observe from (\ref{eq:trace_hetro_1}) and (\ref{eq:trace_hetro_2}), that for a fixed $N$, the CCRB linearly decreases with increasing $K$. However, for varying $N$ with a fixed $K$, the distribution of variances impacts the performance of the estimator. In this particular case, we observe from Figure~\ref{fig:all}(c) that as $N$ increases,  $\trace(\bSigma_2)$ approaches $\trace(\bSigma_1)$. If the probability distribution of the variances is available, then further deductions can be made on the improvement offered by the \emph{average} reference, using (\ref{eq:hetero_factor}).

\section{Conclusion} In this paper, we investigated the effect of the choice of reference in offset estimation of a sensor network, where sensor measurements are plagued with zero-mean Gaussian noise. We show that the average of all the unknown offsets as a parametric constraint \ie the \emph{average} reference, yields the minimum variance unbiased estimator. In the particular case of homoscedasticity, the improvement on the variance of the \emph{average} constrained estimator is by a factor $2$, as compared to any single reference. The factor $2$ improvement is independent of the number of sensors $N$ and the number of measurements $K$, which is not readily intuitive at the outset of the problem formulation. Furthermore, when the underlying Gaussian noise is independent, but not identical, we derive an expression for the improvement offered by the \emph{average} reference. In our future work, we aim to investigate the optimal constraint set for other (possibly non-linear) data models and probability distributions, including a Bayesian inference framework, where the choice of the prior on the unknown parameters would play a key role in the estimator performance.


\bibliography{ref.bib}
\bibliographystyle{IEEEtran}

\end{document}